\renewcommand{\H}{\mathbf{H}}
\newcommand{\Hnull}{\mathcal{H}_0}
\newcommand{\Halt}{\mathcal{H}_1}
\newcommand{\Hin}{{X}}
\newcommand{\Hout}{{Y}}
\newcommand{\Honull}{\mathcal{{D}}_0}
\newcommand{\Hoalt}{\mathcal{{D}}_1}
\newfont{\fsc}{eusm10}                         
\newcommand{\INPUT}{\item[\algorithmicinput]}
\newcommand{\algorithmicinput}{\textbf{Input:}}
\newcommand{\OUTPUT}{\item[\algorithmicoutput]}
\newcommand{\algorithmicoutput}{\textbf{Output:}}
\newtheorem{theorem}{\noindent \textit{Theorem}}
\newtheorem{lemma}{\noindent \textit{Lemma}}
\title{Optimal Information-Theoretic Wireless Location Verification}
\author{
\IEEEauthorblockN{Shihao Yan$^1$,
                  Robert Malaney$^{1}$,
                  Ido Nevat$^2$,
                  and Gareth W. Peters$^{3}$\\
\IEEEauthorblockA{$^1$
School of Electrical Engineering \& Telecommunications,UNSW, Sydney, Australia.}\\
\IEEEauthorblockA{$^2$
Wireless \& Networking Tech. Lab, CSIRO, Sydney, Australia.}\\
\IEEEauthorblockA{$^3$
Department of Statistical Science, University College London, London, United Kingdom.}
} }
\begin{document}


\maketitle

\begin{abstract}

We develop a new Location Verification System (LVS) focussed on
network-based Intelligent Transport Systems and vehicular ad hoc networks. The algorithm we develop is based on an information-theoretic
framework which uses the received signal strength (RSS) from a network of base-stations and
the claimed position. Based on this information we derive the optimal decision regarding the verification
of the user's location. Our algorithm is optimal in the sense of maximizing the mutual information between its input
and output data. Our approach is based on the practical scenario in which a non-colluding malicious user
some distance from  a highway optimally boosts his transmit power in an
attempt to fool the LVS that he is on the highway. We develop a practical threat model for
this attack scenario, and investigate in detail the performance of the LVS in terms of its input/output
mutual information. We show how our LVS decision rule can be implemented straightforwardly with
a performance that delivers near-optimality under realistic threat conditions, with information-theoretic
optimality approached as the malicious user moves further from the highway. The practical advantages our new information-theoretic scheme delivers relative to more traditional Bayesian verification frameworks are discussed.


\end{abstract}
\begin{keywords}
Location Verification, Wireless Networks, Mutual Information, Likelihood Ratio Test, Decision Rule, Threat Model.
\end{keywords}

\section{Introduction}

The almost ubiquitous use of position information in emerging wireless networks has made the issue of  wireless location determination and  location-based services a very active research topic in recent years, \emph{e.g.} see (\cite{malaney2007nuisance,liu2008wireless,kuiper2011geographical,anisetti2011map,
tsalolikhin2011single,mourad2011robust,chang2012footprint,rabayah2012new,yang2013detecting,bialer2013maximum}). This in turn has made the supplementary issue of
Location verification in wireless networks an area of increasing importance. This is in part a consequence of not only the growing number of mobile services that utilize location information, but also in part due to the mission-critical nature the location information being supplied has on the performance, security  and safety of some services. The importance of location verification is perhaps best illustrated in emerging Intelligent Transport Systems (ITS) and vehicular ad hoc networks (VANETs)  where the verification of the location information supplied by vehicles is vital to the safety issues ITS (and VANETs) hope to address \cite{IEEE1609.2}. Indeed, recently there has been much effort in analyzing how a Location Verification System (LVS) in the context of ITS may operate \cite{sastry2003secure,leinmuller2006position, xiao2006detection, song2008secure, yan2009providing, ren2009location, yan2010cross, abumansoor2012secure}. Such ITS-based LVS work is also considered in other recent research efforts on location verification in more generic wireless network settings (\emph{e.g.} \cite{malaney2004location, vora2006secure, malaney2006secure, malaney2007securing, malaney2007wireless, capkun2008secure, zhang2008evaluation, chen2010detecting, liu2010node, chiang2012secure, yang2012securing, wei2012lightweight}).

 In an LVS one aims at verifying a user's claimed position based on some input measurements so as to perform a binary decision on whether the user is \emph{legitimate} (claims his true position) or {\emph{malicious}} (spoofs his claimed position).  In general, an LVS aims at obtaining a low false positive rate for {legitimate} users and a high detection rate for {malicious} users, leading to a tradeoff perhaps best illustrated by a receiver operating characteristic (ROC) curve. However, it is established that ROCs are not always ideal in comparing performances of two  separate systems (\emph{e.g.} \cite{gu2005information,gu2006measuring}). It is also the case that the use of a ROC does not in any formal sense indicate what the  \emph{optimal} operating point of an LVS is.  A possible direction to follow in attempting an optimization of an LVS is to utilize a Bayesian hypothesis test, which with uninformative priors contains the structure of a Likelihood Ratio Test (LRT) - which  minimizes the input/output classification error in the scenario where the cost of all types of misclassifications are equal \cite{chen2010detecting}. Additionally, if the costs of misclassifications are not equal, then a variation of the LRT  decision rule can be formed, namely the Bayes criterion \cite{barkat2005signal}. The Maximum A Posteriori criterion and the Maximum Likelihood criterion are special cases of the Bayes criterion. However, it is well known that these Bayes-decision criteria possess a weakness - they are \emph{subjective}. This subjectivity arises through the necessity to pre-assign costs to the different types of misclassification. It has been discussed before how such subjectivity in Bayes criteria can give rise to confusion when comparisons of detector performances are made \cite{gu2005information,gu2006measuring} . As such, although many of the previous works on LVSs have their own specific verification performance goals in mind, and their own pros and cons, none of these  works identify an optimal LVS in any non-subjective sense.

 To make progress, what is actually required is an \emph{objective} measure of detector performance, namely a single unified metric that takes into account all key aspects of intrusion detection in an objective fashion. As argued in \cite{gu2006measuring}, this metric should be the information-theoretic \emph{mutual information}, and it is this approach we develop here in the context of location verification.  More specifically, we develop  here  an information-theoretic framework for an LVS in which the mutual information between the input and output LVS data   is used as the objective optimization criterion. Some preliminary work along these lines has been attempted but only for sub-optimal decision rules\cite{yan2012information}. In this work we pursue an information-theoretic framework in which the decision rule is an optimal one.

In general an LVS can be characterized as follows. The input data (users to be verified) are represented by binary random variables $X = x$, $x \in [0,1]$,  whose realized elements indicate {legitimate} ($x = 0$) or {malicious} ($x = 1$). Likewise the output data can be represented by binary random variables $Y = y$, $y \in [0,1]$, whose realized elements indicate the binary decision made by the LVS, namely {\emph{verified}} ($y = 0$) or {\emph{not verified}} ($y = 1$). In the LVS, a \emph{decision rule} is formed which indicates whether a user is {malicious} or not. This decision rule ultimately forms a test on whether some statistic (derived from network measurements and some prior information) is less than or equal to some \emph{threshold}. With these definitions in place,  the contributions of this paper can be specifically summarized thus.

\begin{enumerate}
\item We develop for the first time an information-theoretic framework for an LVS, which allows us to utilize the mutual information between ${X}$ and ${Y}$ as a unique criterion to evaluate and optimize the performance of an LVS.
\item Under the assumption of \emph{known} likelihood functions for the measurements,  we prove that the likelihood ratio is the test statistic that produces the maximum mutual information between ${X}$ and ${Y}$.
\item Identifying the threshold value that maximizes the mutual information between ${X}$ and ${Y}$,  we then show how the Likelihood Ratio Test (LRT) is the decision rule which maximizes the mutual information between $X$ and $Y$, and leads to the information-theoretic optimal LVS. We take the further step of determining the likelihood functions under a series of threat models. This leads to a working LVS that will be an optimal information-theoretic approach under the given threat models.
\item We show from our analysis how an effectively optimal LVS, which is simple to deploy in practice, can be developed. We show that our LVS  leads to an optimal solution for most realistic attack scenarios in which a malicious user who is outside a network region, is attempting to spoof that he is within the network region. We further show how optimality is approached as the malicious user moves further from the network region.
\end{enumerate}

The remainder of the paper is structured as follows: Section \ref{Information Framework} presents both the general network system model and our information-theoretic LVS framework. The decision rule that optimizes mutual information is constructed in Section~\ref{LRT Framework}. In Section~\ref{RSS Scenario}, analysis and simulations of our LVS are  presented for a realistic threat model. Section~\ref{Conclusions} concludes the paper.

\section{System Model and LVS Framework}
\label{Information Framework}
In this section, we first present the general location verification system model and the related assumptions. Then, we develop an information-theoretic framework for an LVS, which allows us to utilize the mutual information between ${X}$ and ${Y}$ as an unique criterion to evaluate and optimize an LVS.

\subsection{General System Model} \label{System_Model}
The values of the input data $X$ can be represented as two hypotheses. The first of these is the null hypothesis, $\Hnull$, which assumes the user to be verified is legitimate ($x = 0$). The second one is the alternative hypothesis, $\Halt$, which assumes the user to be verified is malicious ($x=1$). Likewise, the possible values of the output data $Y$ can be represented as two decisions, where $\Honull$ denotes verified ($y = 0$), and $\Hoalt$ denotes not verified ($y = 1$). We now outline the general LVS model, and detail the assumptions we use.

\begin{enumerate}
\item {A \emph{single} user ({legitimate} or {malicious}) reports his claimed location, $\bm{\theta}^c = (u^c, v^c) \in \mathbb{R}^2$, to a network with $K$ ($K > 2$) Base Stations (BSs) in the communication range of the user (the $K$ BSs are not in a line), where $\bm{\theta}_{i}^{BS} = (u^{BS}_i, v^{BS}_i) \in \mathbb{R}^2$ is the location of the $i$th BS, $i = 1,2,...,K$. One of the $K$ BSs is the Process Center (PC), and all other BSs will transmit the measurements collected from the user to the PC. The PC is to make decisions based on the user's claimed location and the measurements collected by all the $K$ BSs.} We assume all BSs are perfectly synchronized.

\item We assume a user ({legitimate} or {malicious}) knows the locations of the $K$ BSs, and that $\bm{\theta}^c$ is supplied by the user to the PC.

\item For the legitimate user, we assume the true location is given by $\bm{\theta} = \bm{\theta}^c$ (here we will ignore the small location determination error, $e.g.$ the GPS error\footnote{When this error is much smaller than the average distance between BSs the effect on the results is negligible.}). We assume the malicious user's true location $\bm{\theta} = (u, v) \in \mathbb{R}^2$ is known exactly to him (\emph{i.e.} again we ignore any small localization error), but is unknown to the network.

\item We assume $\bm{\theta}$ is a bivariate random variable following some distribution. The prior distribution, $i.e.$ the probability density function (pdf), for $\bm{\theta}$ under $\Halt$ is denoted as $p(\bm{\theta}|\Halt)$.

 \item In general, the measurement ($M_i$) collected by the $i$th BS from a legitimate user is dependent on $\bm{\theta}_{i}^{BS}$ and the legitimate user's $\bm{\theta}^c$. In practice, a malicious user can impact the measurements collected by all BSs in order to avoid detection. Thus, the measurement ($M_i$) collected by the $i$th BS from a malicious user is some function of $\bm{\theta}_{i}^{BS}$ and the {malicious} user's $\bm{\theta}$ and his spoofed $\bm{\theta}^c$. Therefore, the measurement ($M_i$) collected by the $i$th BS can be given as a composite model as follows:
\begin{align}
\label{composite_model}
\begin {cases}
\begin{split}
    &\Hnull: M_{i} = h_0 \left(\bm{\theta}_{i}^{BS}, \bm{\theta}^c, \omega\right)\\
    &\Halt: M_{i} = h_1 \left(\bm{\theta}_{i}^{BS}, \bm{\theta}^c, \bm{\theta}, \omega\right),
\end{split}
\end {cases}
\end{align}
where
$h_0$ and $h_1$ are some functions yet to be specified (can involve additional parameters), and $\omega$ is random variable representing the communication channel noise. Given the statistical nature of $\omega$, the composite system model in (\ref{composite_model}) can produce the \emph{likelihood functions} under $\Hnull$ and $\Halt$, which are denoted as $p(\bm{m}|\Hnull)$ and $p(\bm{m}|\Halt)$, respectively,  where $\bm{m} = [m_1, m_2, ..., m_K]$ is a realization of the measurement vector, $\bm{M} = [M_1, M_2, ..., M_K]$.

\item {We also assume a user is legitimate with a known prior probability, which is $P_0 = P(x = 0)$. The probability of a user to be malicious is denoted as $P_1$, and $P_1 = 1-P_0$.}
\end{enumerate}

\begin{figure}[t]
    \centering
        \epsffile{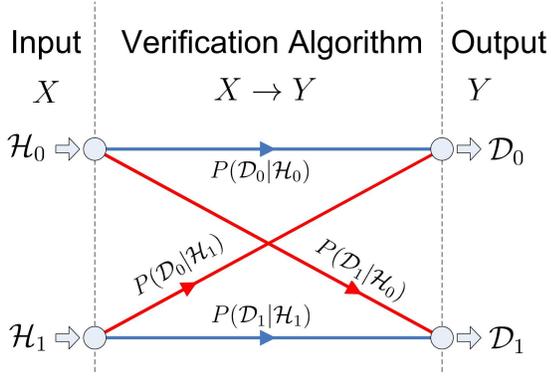}
        \caption{A Location Verification System (LVS) model.}
    \label{fig:channel_model}
\end{figure}

\subsection{Information-Theoretic Framework for an LVS}

In general, the purpose of an LVS is to map the input data $\Hin$ to the output data $\Hout$, $X \rightarrow Y$, and can be represented as shown in Fig.~\ref{fig:channel_model}. In this figure, the false positive rate, $\alpha$, and the detection rate, $\beta$, are given as follows
\begin{align*}
&\alpha = P(\Hoalt|\Hnull),~~1 - \alpha = P(\Honull|\Hnull),\\
&\beta = P(\Hoalt|\Halt),~~1 - \beta = P(\Honull|\Halt),
\end{align*}
where $P$ is the probability of an outcome conditional on a hypotheses. The mutual information between $\Hin$ and $\Hout$ can be expressed as $I({\Hin};{\Hout}) = H({\Hin})-H({\Hin}|{\Hout})$, where $H({\Hin})$ is the entropy of $\Hin$, and $H({\Hin}|{\Hout})$ is the conditional entropy of $\Hin$ given $\Hout$. Given $P_0$, the entropy of the discrete binary random variable $X$ can be written as $H(\Hin) = -\sum_{x}P(X)\log_2 P(X) = -P_0\log_2 P_0 - (1-P_0)\log_2(1 - P_0)$.
 With these definitions, the conditional entropy $H({\Hin}|{\Hout})$ can be expressed as in \cite{gu2005information}
\begin{equation}
\begin{split}
H({\Hin}|&{\Hout}) = -\sum_{x}\sum_{y}P(X,Y)\log_2 P(X|Y)\\
=~ &P_0(1\!-\!\alpha) \left(\!-\!\log_2\frac{P_0(1\!-\!\alpha)}{P_0(1\!-\!\alpha)+(1\!-\!P_0) (1\!-\!\beta)}\right)\\
 & + P_0\alpha\left(\!-\! \log_2\frac{P_0\alpha}{P_0\alpha \!+\! (1\!-\!P_0)\beta}\right)\\
& + (1\!-\!P_0)(1\!-\!\beta) \left(\!-\! \log_2\frac{(1\!-\!P_0)(1\!-\!\beta)}{P_0(1\!-\!\alpha)\!+\! (1\!-\!P_0)(1\!-\!\beta)}\right)\\
& + (1\!-\!P_0)\beta \left(\!-\!\log_2\frac{(1\!-\!P_0)\beta}{P_0\alpha\!+\! (1\!-\!P_0) \beta}\right).
\end{split}
\label{conentropy}
\end{equation}

{The mutual information $I(X, Y)$ measures the reduction of uncertainty of the input $X$ given the output $Y$. For example, if we make verification decisions without any observations (\emph{e.g.} of received signal strengths), $X$ and $Y$ will be independent of each other, and $I({X}; {Y})$ will be minimized (zero). However, based on some observations our LVS attempts to map $X$ into $Y$ so as to minimize the uncertainty of $X$ given $Y$. An extreme example of this is when $X$ and $Y$ are identical and therefore $I({X}; {Y})$ is maximized (of course this would require infinite noisy observations or finite noiseless observations). More generally, given some finite noisy observations, maximizing on the mutual information $I({X}; {Y})$ leads to decisions which maximize the dependence of $X$ and $Y$. As such,  the mutual information  is the natural optimization metric for an LVS from an information-theoretic viewpoint.} The information-theoretic optimal location verification algorithm can be defined as the one which maximizes $I({\Hin};{\Hout})$ defined above.

\section{The Optimal Location Verification Algorithm}
\label{LRT Framework}

In this section, based on the assumption of known likelihood functions under both $\Hnull$ and $\Halt$, we take the additional step of identifying the information-theoretic optimal location verification algorithm, which produces the maximum $I({\Hin};{\Hout})$ relative to any other location verification algorithm.

\subsection{The Decision Rule for Maximizing $I({\Hin};{\Hout})$}

In the context of an LVS, a location verification algorithm must formulate a decision rule to infer whether the user is consistent with $\Hnull$ or $\Halt$. The algorithm ultimately forms a comparison of some test statistic, $F(\bm{m})$, and a corresponding threshold, $T_F$, in the form of
\begin{align*}
F(\bm{m}) \begin{array}{c}
\overset{\Hoalt}{\geq} \\
\underset{\Honull}{<}
\end{array}%
T_F.
\end{align*}
For a given $F(\bm{m})$, we will be interested in the value of $T_F$ which maximizes $I({\Hin};{\Hout})$, $i.e.$, $T_F^* = \arg\underset{T_F}{\max}\,{I({\Hin};{\Hout})}$. Furthermore, we will be interested in determining the functional form of $F(\bm{m})$ that maximizes $I({\Hin};{\Hout})$. This leads to our main result, which is stated in Theorem~1.
\begin{theorem}
\label{theorem 1}
{Given the decision rule
\begin{equation}
\label{arbitrary}
F(\bm{m}) \begin{array}{c}
\overset{\Hoalt}{\geq} \\
\underset{\Honull}{<}
\end{array}%
T_F^*,
\end{equation}
the functional form of $F(\bm{m})$ that maximizes the mutual information $I({\Hin};{\Hout})$ is $\Lambda\left(\bm{m}\right)$, where
\begin{equation}\label{likelihood_ratio}
\Lambda\left(\bm{m}\right) \triangleq \frac{p\left(\bm{m}|\Halt\right)}{p\left(\bm{m}|\Hnull\right)}.
\end{equation}
}
\end{theorem}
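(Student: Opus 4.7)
The plan is to reduce the functional optimization over $F$ to a two-dimensional optimization and then invoke the Neyman--Pearson lemma together with a monotonicity property of the mutual information. First I would note that since both $X$ and $Y$ are binary, the joint law of $(X,Y)$ and hence $I(X;Y)$ is completely determined by the triple $(P_0,\alpha,\beta)$. With $P_0$ fixed by the problem, maximizing $I(X;Y)$ over all tests $(F,T_F)$ reduces to maximizing a scalar function $I(\alpha,\beta)$ over the set of $(\alpha,\beta)$ achievable by some decision rule applied to $\bm{m}$.

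Next I would decouple the search into an inner step at fixed false--positive rate and an outer step over that rate. The inner step is precisely the Neyman--Pearson lemma: for any prescribed $\alpha$, the rule that maximizes $\beta$ is the LRT with threshold tuned to yield that $\alpha$. Thus the LRT operating characteristic dominates, pointwise in $\alpha$, the operating characteristic of any other test statistic $F$. The outer step is then to show that $I(\alpha,\beta)$ is monotonically non-decreasing in $\beta$ at each fixed $\alpha$ (on the relevant range $\beta\ge\alpha$), so that shifting up to the LRT curve cannot decrease the mutual information.

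The monotonicity claim is the one real computation. Starting from (\ref{conentropy}) and differentiating $H(X|Y)$ with respect to $\beta$ at fixed $\alpha,P_0$, I expect everything but the $b=(1-P_0)(1-\beta)$ and $d=(1-P_0)\beta$ terms to drop out, giving
\begin{equation*}
\frac{\partial H(X|Y)}{\partial\beta}=(1-P_0)\log_2\frac{(1-\beta)\,P(\Hoalt)}{\beta\,P(\Honull)},
\end{equation*}
with $P(\Honull)=P_0(1-\alpha)+(1-P_0)(1-\beta)$ and $P(\Hoalt)=P_0\alpha+(1-P_0)\beta$. A short manipulation shows that the logarithm's argument is $\le 1$ exactly when $\beta\ge\alpha$, so $\partial I/\partial\beta\ge 0$ on this regime, with equality only at the uninformative point $\beta=\alpha$.

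Putting the three pieces together: given any competing test $(F,T_F^*)$ with operating point $(\alpha^*,\beta^*)$, I may assume $\beta^*\ge\alpha^*$ (otherwise swap the decision labels, which leaves $I(X;Y)$ invariant and produces $1-\beta^*\ge 1-\alpha^*$). The LRT with threshold chosen to give false--positive rate $\alpha^*$ achieves some $\beta^\Lambda\ge\beta^*$ by Neyman--Pearson, and then $I(\alpha^*,\beta^\Lambda)\ge I(\alpha^*,\beta^*)$ by monotonicity. Hence $\max_{T_\Lambda} I(X;Y)\ge\max_{T_F} I(X;Y)$ for every $F$, which is the claim. The main obstacle I anticipate is the sign bookkeeping in the derivative above and the symmetry argument that lets us restrict to $\beta\ge\alpha$; both are elementary but need to be handled cleanly to rule out degenerate endpoints where $\alpha$, $\beta$, or $P_0$ equals $0$ or $1$.
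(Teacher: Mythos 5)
Your proposal is correct and follows essentially the same route as the paper: the Neyman--Pearson lemma to dominate any competing test's detection rate at fixed $\alpha$, combined with the monotonicity of $I(X;Y)$ in $\beta$ on the regime $\beta\ge\alpha$ (your derivative expression matches the paper's, whose sign analysis reduces to $P_0(\beta-\alpha)\ge 0$). Your label-swapping argument to dispose of the case $\beta<\alpha$ is a small tightening over the paper, which simply postulates $\alpha<\beta$ as a basic requirement of any useful LVS.
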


To prove Theorem~\ref{theorem 1}, we first introduce two lemmas, of which the first is the Neyman-Pearson Lemma \cite{neyman1933problem}.
\begin{lemma}
\label{lemma 1}
{
Consider two hypotheses $\Hnull$ and $\Halt$, the decision rule to maximize a detection rate ($\beta$) for a given false positive rate ($\alpha$) is
\begin{align}\label{Neyman}
\Lambda\left(\bm{m}\right)
\begin{array}{c}
\overset{\Hoalt}{\geq} \\
\underset{\Honull}{<}
\end{array}%
T_{\Lambda},
\end{align}
where $T_{\Lambda}$ is determined by the specified value of $\alpha$. For proof, see \cite{neyman1933problem}. Before proceeding, we note $\alpha < \beta$ will be a basic requirement for any useful LVS.}
\end{lemma}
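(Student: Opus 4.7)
The plan is to follow the classical Neyman--Pearson argument by directly comparing the rejection region of the likelihood ratio test with that of an arbitrary competitor. First I would phrase both rules in terms of critical regions: let $R_1^{\Lambda} = \{\bm{m}: \Lambda(\bm{m}) \geq T_\Lambda\}$ denote the rejection region of the LRT with $T_\Lambda$ chosen so that $\int_{R_1^{\Lambda}} p(\bm{m}|\Hnull)\,d\bm{m} = \alpha$, and let $R_1$ be the rejection region of \emph{any} other decision rule whose false positive rate is at most $\alpha$. The corresponding detection rates are $\beta_\Lambda = \int_{R_1^{\Lambda}} p(\bm{m}|\Halt)\,d\bm{m}$ and $\beta = \int_{R_1} p(\bm{m}|\Halt)\,d\bm{m}$, and the goal reduces to showing $\beta_\Lambda \geq \beta$.

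The main tool I would introduce is the auxiliary integral
\begin{equation*}
\Delta \;=\; \int \bigl(\mathbb{1}_{R_1^{\Lambda}}(\bm{m}) - \mathbb{1}_{R_1}(\bm{m})\bigr)\bigl(p(\bm{m}|\Halt) - T_\Lambda\,p(\bm{m}|\Hnull)\bigr)\,d\bm{m}.
\end{equation*}
The key step is a pointwise sign check: on $R_1^{\Lambda}$ the bracket $p(\bm{m}|\Halt) - T_\Lambda p(\bm{m}|\Hnull)$ is non-negative by definition of $R_1^{\Lambda}$, while the indicator difference $\mathbb{1}_{R_1^{\Lambda}} - \mathbb{1}_{R_1}$ is also non-negative there; outside $R_1^{\Lambda}$ both factors flip sign simultaneously, so the product is again non-negative. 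Hence $\Delta \geq 0$.

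Expanding $\Delta$ and grouping terms then gives
\begin{equation*}
\Delta \;=\; (\beta_\Lambda - \beta) \;-\; T_\Lambda\bigl(\alpha - \alpha'\bigr),
\end{equation*}
where $\alpha'$ denotes the false positive rate of the competitor rule. Because $\alpha' \leq \alpha$ by hypothesis and $T_\Lambda \geq 0$, the subtracted term is non-negative, and $\Delta \geq 0$ immediately yields $\beta_\Lambda \geq \beta$. Combined with Lemma~1's premise that a useful LVS requires $\alpha < \beta$, this establishes that the LRT with threshold $T_\Lambda$ attains the maximum detection rate among all tests of size at most $\alpha$.

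The step I expect to be most delicate is the choice of $T_\Lambda$ when the distribution of $\Lambda(\bm{M})$ under $\Hnull$ has atoms, so that no deterministic threshold yields false positive rate exactly $\alpha$. In that event the clean argument above needs a randomised test: one replaces $\mathbb{1}_{R_1^{\Lambda}}$ by a critical function $\phi(\bm{m}) \in [0,1]$ that equals $1$ when $\Lambda(\bm{m}) > T_\Lambda$, $0$ when $\Lambda(\bm{m}) < T_\Lambda$, and a tuning value $\gamma \in [0,1]$ on the boundary set $\{\Lambda(\bm{m}) = T_\Lambda\}$, with $\gamma$ chosen to attain equality $E[\phi|\Hnull] = \alpha$. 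The same pointwise sign argument then goes through with $\phi - \phi'$ in place of the indicator difference. For the continuous RSS-based measurement model used later in the paper this subtlety does not arise and the simpler deterministic version suffices.
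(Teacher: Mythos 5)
Your proposal is a correct and complete rendition of the classical Neyman--Pearson argument (including the right caveat about randomised tests when $\Lambda(\bm{M})$ has atoms under $\Hnull$, which indeed does not arise for the continuous RSS model). The paper itself supplies no proof for this lemma --- it simply defers to the original Neyman--Pearson reference --- and your sign-check on the auxiliary integral $\Delta$ is essentially the argument contained in that cited source, so nothing further is needed.
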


\begin{lemma}
\label{lemma 2}
{
 Given the assumption $\alpha < \beta$, the mutual information $I({\Hin};{\Hout})$ is a monotonic increasing function of the detection rate $\beta$.}
\end{lemma}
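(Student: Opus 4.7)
The plan is to rewrite the mutual information in the equivalent form $I(X;Y) = H(Y) - H(Y|X)$ rather than $H(X) - H(X|Y)$, because $H(Y|X)$ decomposes cleanly into the conditional binary entropies of the error/detection rates. Specifically, writing $H_b(\cdot)$ for the binary entropy function, the channel structure of the LVS gives
\begin{align*}
H(Y|X) &= P_0\, H_b(\alpha) + (1-P_0)\, H_b(\beta),\\
H(Y) &= H_b\bigl(P_0 \alpha + (1-P_0)\beta\bigr),
\end{align*}
so $I(X;Y)$ is expressible as a smooth closed-form function of $\alpha$ and $\beta$ with parameter $P_0$. This avoids having to wrestle directly with the four-term expression for $H(X|Y)$ displayed in (\ref{conentropy}).

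Next I would compute $\partial I(X;Y)/\partial \beta$ with $\alpha$ and $P_0$ held fixed. Using $H_b'(x) = \log_2\!\bigl[(1-x)/x\bigr]$, the chain rule gives
\begin{equation*}
\frac{\partial I(X;Y)}{\partial \beta} = (1-P_0)\Bigl[H_b'\bigl(P_0\alpha + (1-P_0)\beta\bigr) - H_b'(\beta)\Bigr],
\end{equation*}
which collapses to a single logarithm,
\begin{equation*}
\frac{\partial I(X;Y)}{\partial \beta} = (1-P_0)\log_2 \frac{\beta\bigl(P_0(1-\alpha) + (1-P_0)(1-\beta)\bigr)}{(1-\beta)\bigl(P_0\alpha + (1-P_0)\beta\bigr)}.
\end{equation*}

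The final step is to show this quantity is strictly positive whenever $\alpha < \beta$. I would cross-multiply the argument of the logarithm against $1$ and expand both sides; the $\beta^2$, $(1-P_0)\beta$, and $\beta P_0 \alpha$ terms cancel pairwise, leaving the clean inequality $\beta P_0 > \alpha P_0$, i.e.\ $\beta > \alpha$. Combined with the factor $(1-P_0) > 0$, this gives $\partial I(X;Y)/\partial \beta > 0$ under the hypothesis of the lemma, establishing strict monotonicity.

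I do not expect a serious obstacle here: the only delicate part is the algebraic bookkeeping in the cross-multiplication, but the cancellations are symmetric and the condition $\alpha < \beta$ emerges naturally as the threshold separating the two signs of the derivative. A minor stylistic choice is whether to present the result via the $H(Y)-H(Y|X)$ decomposition (shortest proof) or to differentiate the $H(X|Y)$ expression in (\ref{conentropy}) directly (more consistent with the paper's notation but algebraically heavier); the former seems clearly preferable.
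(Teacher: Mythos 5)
Your proof is correct and follows essentially the same route as the paper: both compute $\partial I(X;Y)/\partial\beta$, arrive at the identical expression $(1-P_0)\log_2\frac{\beta[P_0(1-\alpha)+(1-P_0)(1-\beta)]}{(1-\beta)[P_0\alpha+(1-P_0)\beta]}$, and reduce positivity of the logarithm's argument to the difference of numerator and denominator being $P_0(\beta-\alpha)>0$. The only difference is organizational: you reach that derivative via the symmetric decomposition $I=H(Y)-H(Y|X)$ and $H_b'(x)=\log_2[(1-x)/x]$, whereas the paper differentiates $-H(X|Y)$ directly from the four-term expansion in (\ref{conentropy}); your bookkeeping is lighter but the content is the same.
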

\begin{proof}[Proof of Lemma 2]
Since $H(\Hin)$ is not dependent on $\beta$, the first derivative of $I({\Hin};{\Hout})$ with respect to $\beta$ can be expressed as
\begin{equation*}
\begin{split}
\frac{\partial I({\Hin};{\Hout})}{\partial \beta} &= \frac{\partial \left(H(\Hin) - H({\Hin}|{\Hout})\right)}{\partial \beta} = - \frac{\partial H({\Hin}|{\Hout})}{\partial \beta}\\
=~& (1-P_0) \left(\log_2\frac
{\beta[P_0(1-\alpha)+(1-P_0)(1-\beta)]}{(1-\beta)[P_0\alpha+(1-P_0)\beta]}\right)\\
 =~&(1-P_0) \log_2 \underbrace{\{\beta [P_0(1-\alpha)+(1-P_0)(1-\beta)]\}}_{V_{0}}\\
& - (1-P_0)\log_2 \underbrace{\{[P_0\alpha+(1-P_0)\beta](1-\beta)\}}_{V_{1}}.
\end{split}
\end{equation*}
Note, since $0 < P_0 < 1$, and the logarithm is a monotonic increasing function of $z$, then $\partial I({\Hin};{\Hout})/{\partial \beta}$ has the same sign as $(V_{0}~-~V_{1})$, where
\begin{align*}
V_{0} - V_{1} =~& \beta [P_0(1-\alpha)+(1-P_0)(1-\beta)]\\
 & - [P_0\alpha+(1-P_0)\beta](1-\beta)\\
=~& P_0 (\beta - \alpha).
\end{align*}
Thus, given the assumption $\alpha < \beta$, then $\partial I({\Hin};{\Hout})/{\partial \beta} > 0$, and Lemma \ref{lemma 2} is proved.
\end{proof}

Given Lemma~\ref{lemma 1} and Lemma~\ref{lemma 2}, we now prove Theorem~\ref{theorem 1}.\\
\begin{proof}[Proof of Theorem 1]
If the specified value of $\alpha$ in Lemma~\ref{lemma 1} is that which results in the value ${T_F^*}$ of (\ref{arbitrary}), then by Lemma~\ref{lemma 2} the result follows.
\end{proof}

\subsection{The Optimal Location Verification Algorithm}

Based on the above discussion, the optimal information-theoretic location verification algorithm is presented in Algorithm \ref{alg:sfr_el}.
\algsetup{indent=2em}
\begin{algorithm}[!hb]
\caption{Optimal Location Verification Algorithm}
\label{alg:sfr_el}
\begin{algorithmic}[1]
\INPUT priori probability $P_0$, likelihood functions.
\OUTPUT binary decisions $\Hoalt$ and $\Honull$.
\medskip
\STATE Determine the functional forms of $h_0$ and $h_1$ in (\ref{composite_model}).
\STATE Specify the prior distributions for $\bm{\theta}$, $p(\bm{\theta}|\Hnull)$ and $p(\bm{\theta}|\Halt)$, and determine the likelihood functions $p(\bm{m}|\Hnull)$ and $p(\bm{m}|\Halt)$.
\STATE With (\ref{Neyman}) as the general decision rule, derive the functional form of $\alpha$ and $\beta$. Note, $\alpha$ and $\beta$ will be functions of $T_{\Lambda}$.
\STATE Using $I(X;Y)$ as the objective function, search for $T^*_{\Lambda}$, which is the value of $T_{\Lambda}$ that maximizes $I(X;Y)$.
\STATE Collect measurements and calculate the likelihood ratio $\Lambda(\bm{m})$ according to the likelihood functions determined in step~2.
\STATE Form the optimal decision rule,
\begin{align}
\label{optimal decision rule}
\Lambda(\bm{m})
\begin{array}{c}
\overset{\Hoalt}{\geq} \\
\underset{\Honull}{<}
\end{array}%
T_{\Lambda}^*.
\end{align}
\end{algorithmic}
\end{algorithm}
%
%
%
%

\section{Specific Optimal Location Verification Algorithm with RSS as Measurements}
\label{RSS Scenario}

In order to implement the optimal location verification algorithm, in this section we take the further step of determining the likelihood functions under $\Hnull$ and $\Halt$ with Received Signal Strength (RSS) as the system measurements, and we consider the algorithm under a series of threat models.

Although the framework we develop can be built on any measurement (location information metric), such as RSS, TOA (time of arrival) and TDOA (time difference of arrival), for purposes of illustration we focus here only on an RSS implementation. In this case, the measurement $M_i$ is the RSS (in dB) collected by the $i$th BS. We will also assume that the legitimate user and all BSs are equipped with only a single omni-direction antenna.

 Let us define the set of BSs that are within range  of a legitimate user positioned at $\bm{\theta}^c$ as the  \emph{in-range} BSs. This set of BSs forms an effective perimeter for the network used in the location verification.  We will assume a single malicious user has the technology (\emph{e.g.}  directional  beam-forming), which allows him to ensure (if required) that from some position outside the perimeter only the in-range BSs  receive a non-zero RSS. The malicious user can set the power of the main directional beam. We do \emph{not} allow an adversary to set multiple  beams to different BSs via  colluding malicious users (see later discussion).

 Based on the log-normal propagation model \cite{goldsmith2005wireless}, $h_0$ in (\ref{composite_model}) can be specified as
\begin{align}\label{h0_RSS}
h_0 (\bm{\theta}_{i}^{BS}, \bm{\theta}^c, \omega) = P_0 - 10 \gamma  \log_{10}\left(\frac{d_i^c}{d_0}\right) + \omega,
\end{align}
where $P_0$ is a reference received power, $d_0$ is the reference distance, $\gamma$ is the path loss exponent, $\omega$ (in dB) is a zero-mean normal random variable with variance $\sigma_{dB}^2$, and the Euclidean distance of the $i$th BS to the user's claimed location $\bm{\theta}^c = (u^c, v^c)$ is
\begin{align*}
d_i^c = d (\bm{\theta}^c,\bm{\theta}_{i}^{BS}) = \sqrt{(u^c - u^{BS}_i)^2 + (v^c - v^{BS}_i)^2}.
\end{align*}
A malicious user can adjust his transmit power to impact the measurements collected by the BSs, thus $h_1$ in  (\ref{composite_model}) can be expressed as
\begin{align}\label{h1_RSS}
h_1 (\bm{\theta}_{i}^{BS}, \bm{\theta}^c, \bm{\theta}, \omega) = P_0 + P_x - 10 \gamma  \log_{10}\left(\frac{d_i^t}{d_0}\right) + \omega,
\end{align}
where $P_x$ is a power level set by the malicious user, and $d_i^t = d (\bm{\theta},\bm{\theta}_{i}^{BS})$ is the Euclidean distance of the $i$th BS to the user's true location $\bm{\theta} = (u, v)$.

Assuming all  $M_i$'s are independent from each other, the likelihood function $p(\bm{m}|\Hnull)$ can be expressed as
\begin{equation}
\begin{split}
p(\bm{m}| \Hnull) &= \prod_{i=1}^{K}p({m_i}|\Hnull)\\
&= \prod_{i=1}^{K}\frac {1}{\sqrt{2\pi}\sigma_{dB}}\exp\left(\frac{-\left(m_i - \mu^c_i \right)^2}{2\sigma_{dB}^2}\right),
\end{split}
\label{likelihood0_one}
\end{equation}
where
\begin{align*}\label{mean}
\mu_i^c \!=\! P_0 \!-\! 10 \gamma  \log_{10}\left(\frac{d_i^c}{d_0}\right).
\end{align*}
Also, the pdf of $\bm{M}$ conditional on $\bm{\theta}$ under $\Halt$, $p(\bm{m}|\bm{\theta}, \Halt)$, can be written as
\begin{equation}
\begin{split}
&p(\bm{m}|\bm{\theta},\Halt) = \prod_{i=1}^{K}p({m_i}|\bm{\theta}, \Halt)\\
 &= \prod_{i=1}^{K}\frac {1}{\sqrt{2\pi}\sigma_{dB}}\exp\left(\frac{\!-\!\left(m_i\!-\!P_0\!-\! P_x \!+\! 10 \gamma  \log_{10}\left(\frac{d_i^t}{d_0}\right) \right)^2}{2\sigma_{dB}^2}\right).
\end{split}
\label{Likelihood1}
\end{equation}

In general, a malicious user will utilize $P_x$ in an attempt to impact the measurements collected by the BSs in order to avoid detection. We now discuss how to determine the `optimal' value of $P_x$ from a malicious user's point of view. An LVS can be spoofed optimally if the measurements collected from a malicious user follow exactly $p(\bm{m}|\Hnull)$, which is given by (\ref{likelihood0_one}). Therefore, in order to avoid detection a malicious user attempts to minimize the difference between $p(\bm{m}|\Hnull)$ and $p(\bm{m}|\bm{\theta}, \Halt)$. This difference can be quantified through the KL-divergence between the two likelihood functions, which is defined as follows \cite{cover2006elements}
\begin{equation*}
\begin{split}
D_{KL}(p(&\bm{m}|\Hnull)|| p(\bm{m}|\bm{\theta}, \Halt))\\
&= \int p(\bm{m}|\Hnull) \ln {\frac{p(\bm{m}|\Hnull)}{p(\bm{m}|\bm{\theta}, \Halt)}} d{\bm{m}}\\
&= \sum_{i=1}^{K} \frac{\left(\mu_i^c - \mu_i^t - P_x \right)^2}{2\sigma_{dB}^2},
\end{split}
\label{KLdistance}
\end{equation*}
where
\begin{align*}\label{mean}
\mu_i^t \!=\! P_0 \!-\! 10 \gamma  \log_{10}\left(\frac{d_i^t}{d_0}\right).
\end{align*}
This KL-divergence is the information lost when $p(\bm{m}|\bm{\theta}, \Halt)$ is used to approximate $p(\bm{m}|\Hnull)$, and it becomes zero \emph{if and only if} the two distributions are identical. From an information-theoretic point of view the optimal value of $P_x$ can be expressed as
\begin{equation}
\begin{split}
P_x^* &= \arg\underset{P_x}{\min}\,{D_{KL}\left(p(\bm{m}|\Hnull|| p(\bm{m}|\bm{\theta}, \Halt)\right)}\\
&= \frac{1}{K}\sum_{i=1}^{K} \left(\mu_i^c - \mu_i^t \right).
\end{split}
\label{optimal Px}
\end{equation}
Setting $P_x = P_x^*$ in (\ref{h1_RSS}), $h_1$ can be rewritten as
\begin{align}\label{h1_RSS_general}
h_1 (\bm{\theta}_{i}^{BS}, \bm{\theta}^c, \bm{\theta}, \omega) \!=\! P_0 \!+\! \bar{\mu}^c \!-\! \bar{\mu}^t \!-\! 10 \gamma \log_{10}\left(\frac{d_i^t}{d_0}\right) \!+\! \omega,
\end{align}
where
\begin{align*}\label{Pt and Pc}
\bar{\mu}^c = \frac{1}{K}\sum_{i=1}^K \mu^c_i, ~\mathrm{and}~~\bar{\mu}^t = \frac{1}{K}\sum_{i=1}^K \mu^t_i.
\end{align*}
Although $\bm{\theta}$ is a known deterministic parameter for a malicious user, it is unknown for the network. This means $h_1$ is still unknown, and therefore the likelihood function $p(\bm{m}|\Halt)$ is unknown for the LVS. To make progress, we will assume some realistic threat models within which $p(\bm{m}|\Halt)$ becomes known.

\subsection{Threat Model}

\begin{figure}[t]
    \centering
        \epsfysize=6.5cm
        \epsfxsize=7cm
        \epsffile{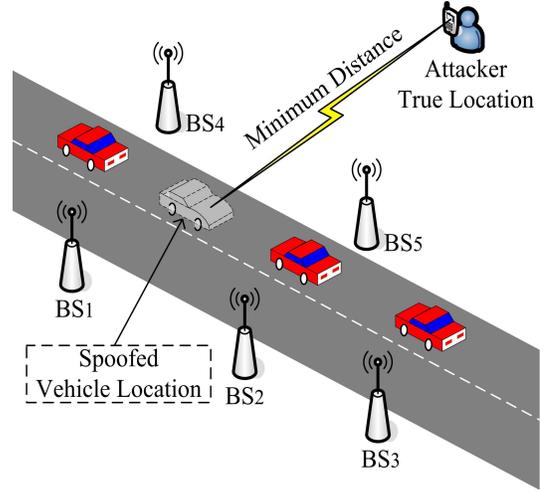}
        \caption{Illustration of the Minimum Distance (MD) threat model.}
    \label{fig:threat_model}
\end{figure}

The threat model we adopt can in principle accommodate any true/spoofed location pair. However, as the spoofed location approaches the true location, our detection  rate will approach zero (as expected for any verification system). As such, in the following we will make the assumption that the true position of an attacker is some minimum distance from the spoofed position. To quantify this we will assume that this distance is \emph{always greater than the mean separation distance between  BSs}. This is a reasonable assumption since it is unlikely an attacker will try to spoof a location too close to his actual locale, for fear of apprehension. Pragmatically, this assumption also means that in effect the attacker will be placed at some minimum distance \emph{off} the highway, since we find that if he is on the highway, under our minimum distance assumption, he is trivially detectable.\footnote{A  malicious user (vehicle) on the highway claiming to be at a position which is at least the mean separation distance between  BSs, must boost its transmissions significantly. However, since  the user is on the actual highway, the nearest BS to its true location will easily detect an attack. This is confirmed by our analysis, and as such we will henceforth consider the attacker to be sophisticated enough to realize that to have any reasonable chance of remaining undetected he must launch his attacks from a position at some minimum distance from the closet VANET BS (\emph{i.e.}  off the highway).}

  We henceforth refer to our generic threat model as the \emph{Minimum Distance} (MD) threat model, a schematic of which is given in Fig.~\ref{fig:threat_model}. In this scenario it is assumed the BSs that form the infrastructure part of the VANET, are placed alongside the highway (or on overhanging structures along the highway). This represents a realistic expectation for the physical deployment architecture for VANETs that is emerging from the ITS community.

However, before presenting the details of the DM threat model, we will first consider some simplifying approximations, which although not adopted in the DM threat model, do allow for additional insight and analytical clarity. We will also show how the optimal threshold derived for the MD threat model
is effectively the same as the optimal threshold derived under some of the simplifying approximations.

\subsubsection{Far Field Approximation (FFA)} \label{Simple Scenario}

In this subsection we propose the deployment of our LVS within a threat model where the far-field approximation (FFA) is made, meaning that the malicious user's distance from the highway is far enough that we can assume all RSS values received by all BSs are equal. Although never achieved in practice this simplification will allow us some initial insight into the performance of the LVS. Under the FFA  we can take the distance of a {malicious} user's true location to every BS to be approximated as a constant, $i.e.$ $d_i^t = d (\bm{\theta},\bm{\theta}_{i}^{BS}) \cong constant, \forall i \in [1, 2, \dots, K]$. Therefore, we will assume
\begin{equation}\label{Pt}
\frac{1}{K}\sum_{i=1}^K 10 \gamma \log_{10}\left(\frac{d_i^t}{d_0}\right) = 10 \gamma \log_{10}\left(\frac{d_i^t}{d_0}\right).
\end{equation}
Substituting  (\ref{Pt}) into (\ref{h1_RSS_general}), $h_1$ under the FFA can be expressed as
\begin{align}\label{IFA measurement model}
h_1 \left(\bm{\theta}_{i}^{BS}, \bm{\theta}^c, \omega\right) = \bar{\mu}^c + \omega.
\end{align}
Then, $p(\bm{m}|\Halt)$ (which now does not depend on $\bm{\theta}$) can be written as
\begin{align}\label{likelihood1_one}
p(\bm{m}|\Halt)  = \prod_{i=1}^{K}\frac {1}{\sqrt{2\pi}\sigma_{dB}}\exp\left(\frac{-\left(m_i - \bar{\mu}^c \right)^2}{2\sigma_{dB}^2}\right).
\end{align}
Based on (\ref{likelihood_ratio}), (\ref{likelihood0_one}) and (\ref{likelihood1_one}), we construct the decision rule
\begin{equation}
\label{IFA_decision_rule}
\Lambda\left(\bm{m}\right) = \frac{\exp\left(\frac{-\sum_{i=1}^{K}(m_i - \bar{\mu}^c)^2}{2\sigma_{dB}^2}\right)}{\exp\left(\frac{-\sum_{i=1}^{K}(m_i - \mu_i^c)^2}{2\sigma_{dB}^2}\right)}
\begin{array}{c}
\overset{\Hoalt}{\geq} \\
\underset{\Honull}{<}
\end{array}%
T_{\Lambda}.
\end{equation}
In order to help determine $\alpha$ and $\beta$ analytically, this decision rule can be rewritten as
\begin{subequations}\label{IFA_new_decision_rule}
\begin{align}
\digamma(\bm{m})
\begin{array}{c}
\overset{\Hoalt}{\geq} \\
\underset{\Honull}{<}
\end{array}
\Gamma,
\end{align}
where
\begin{align}
\digamma(\bm{m}) = {\mathop{\sum}\limits_{i=1}^{K}}m_i (\bar{\mu}^c - \mu_i^c),
\end{align}
and
\begin{align}\label{IFA_threshold}
\Gamma = \frac{1}{2}\left(2\sigma_{dB}^2 \ln {T_{\Lambda}} - {\mathop{\sum}\limits_{i=1}^{K}}\left((\mu_i^c)^2 -(\bar{\mu}^c)^2\right)\right).
\end{align}
\end{subequations}
Given (\ref{likelihood0_one}) and (\ref{IFA_new_decision_rule}) , we have
\begin{align*}
&p\left(\digamma(\bm{m})|\Hnull\right) = N\left(\sum_{i=1}^{K} \mu_i^c \left(\bar{\mu}^c - \mu_i^c\right), \sum_{i=1}^{K} \left(\bar{\mu}^c - \mu_i^c\right)^2 \sigma_{dB}^2\right),
\end{align*}
where $N(a, b)$ represents a normal distribution with $a$ and $b$ as the mean and variance, respectively. Likewise, given (\ref{likelihood1_one}) and (\ref{IFA_new_decision_rule}), we have
\begin{align*}
&p\left(\digamma(\bm{m})|\Halt \right) = N\left(\sum_{i=1}^{K} \bar{\mu}^c \left(\bar{\mu}^c - \mu_i^c\right), \sum_{i=1}^{K} \left(\bar{\mu}^c - \mu_i^c\right)^2 \sigma_{dB}^2\right).
\end{align*}
The false positive and detection rates under the FFA can now be expressed analytically as
\begin{equation}
\begin{split}
\alpha = P(&\Lambda(\bm{m}) > T_{\Lambda}|\Hnull) = P(\digamma(\bm{m}) > \Gamma|\Hnull)\\
&= {\mathcal{Q}}\left(\frac{\Gamma - \sum_{i=1}^{K} \mu_i^c \left(\bar{\mu}^c - \mu_i^c\right)}{\sqrt{\sum_{i=1}^{K} \left(\bar{\mu}^c - \mu_i^c\right)^2 \sigma_{dB}^2}}\right),
\end{split}
\label{IFA_alpha}
\end{equation}
\begin{equation}\label{IFA beta}
\begin{split}
\beta = P(&\Lambda(\bm{m}) > T_{\Lambda}|\Halt) = P(\digamma(\bm{m}) > \Gamma|\Halt)\\
&= {\mathcal{Q}}\left(\frac{\Gamma - \sum_{i=1}^{K} \bar{\mu}^c \left(\bar{\mu}^c - \mu_i^c\right)}{\sqrt{\sum_{i=1}^{K} \left(\bar{\mu}^c - \mu_i^c\right)^2 \sigma_{dB}^2}}\right),
\end{split}
\end{equation}
where ${\mathcal{Q}}(x) = \frac{1}{\sqrt{2 \pi}}\int_x^{\infty} \exp^{-t^2/2} dt$.

Having determined $\alpha$ and $\beta$ under the FFA, we can use these in (2) for the conditional entropy $H(X|Y)$. The value of $\Gamma$ which maximizes $I({\Hin};{\Hout}) = H({\Hin})-H({\Hin}|{\Hout})$, denoted as $\Gamma^*$, can be determined numerically. Using (\ref{IFA_threshold}), the $T_{\Lambda}^*$ can be determined by $\Gamma^*$. Then, the decision rule in (\ref{optimal decision rule}) which leads to the optimal verification algorithm under the FFA can be formed, where $\Lambda(\bm{m})$ is specified in (\ref{IFA_decision_rule}).

\begin{figure}[t]
    \centering
        \epsfysize=7cm
        \epsffile{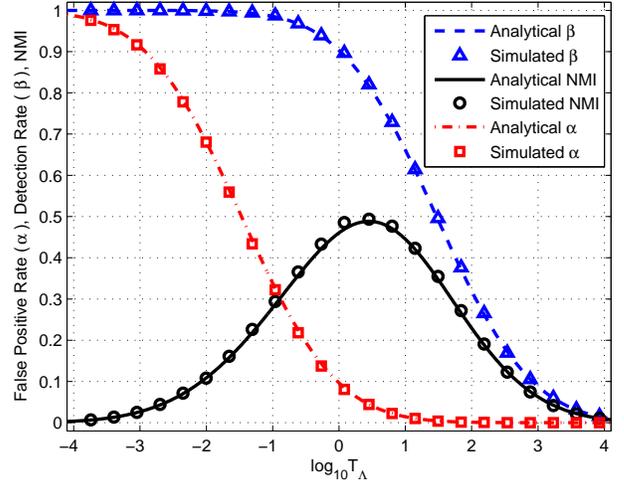}
        \caption{Analytical and simulated $\alpha$, $\beta$ and Normalized Mutual Information (NMI) with different values of $T_{\Lambda}$, where $K = 10, L = 1$, and $\sigma_{dB} = 5$.}
    \label{fig:IFA}
\end{figure}

We verify the false positive and detection rates, given by (\ref{IFA_alpha}) and (\ref{IFA beta}), respectively, via detailed Monte Carlo simulations. The simulation settings are chosen so as to mimic a location verification test over an area spanning the intersection of several major freeways:
\begin{itemize}
\item The $K$ BSs are randomly distributed in a $200m\times 200m$ square area, $4 \leq K \leq 10$.
\item The claimed locations of legitimate and malicious users are the same, which is $\bm{\theta}^c = [0, 0]$.
\item The legitimate users are at $\bm{\theta}^c$. The malicious users are infinitely far away from $\bm{\theta}^c$, which in practice means the measurements collected are generated according to (\ref{IFA measurement model}).
\item Each BS collects $L$ measurements from each user.
\end{itemize}

\begin{figure}[t]
    \centering
        \epsfysize=7cm
        \epsffile{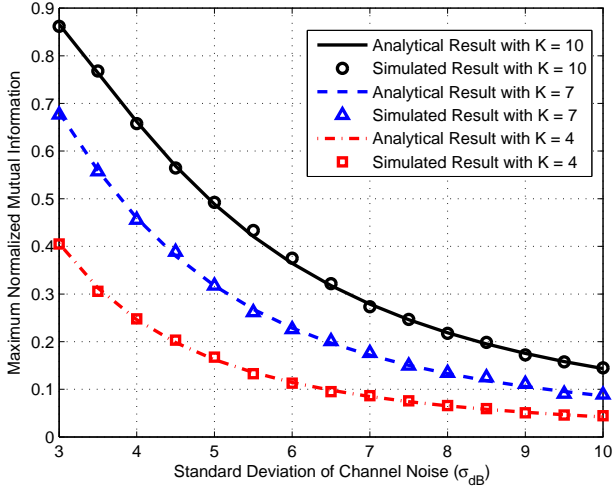}
        \caption{Maximum Normalized Mutual Information (NMI) with different values of $K$ and $\sigma_{dB}$, where $L = 1$.}
    \label{fig:IFA_sigma}
\end{figure}

In the following, the simulation results are obtained through 10,000 Monte Carlo realizations of the measurement vector $\bm{M}$, and in all the specific results shown we have adopted the prior probability $P_0 = 0.9$, and the path loss exponent $\gamma = 3$. Also note, we denote the Normalized Mutual Information (NMI) as
\begin{equation}\label{NMI}
\textrm{NMI} = {I(\Hin; \Hout)}/{H(\Hin)}.
\end{equation}

In Fig.~{\ref{fig:IFA}}, the analytical $\alpha$ and $\beta$ are directly derived from (\ref{IFA_alpha}) and (\ref{IFA beta}), respectively, and the analytical NMI is calculated using (\ref{conentropy}), (\ref{IFA_alpha}), (\ref{IFA beta}) and (\ref{NMI}).  In order to obtain the simulated $\alpha$, we randomly generate $\bm{M}$ according to (\ref{h0_RSS}), from which we get a specific realization of $\Lambda(\bm{m})$, and for each value of $T_{\Lambda}$ we decide whether the user is legitimate or malicious by (\ref{IFA_decision_rule}). To obtain the simulated $\beta$, we randomly generate $\bm{M}$ according to (\ref{IFA measurement model}), and follow the same procedure as above for $\alpha$. The simulated NMI are calculated using (\ref{conentropy}) and (\ref{NMI}) with the simulated $\alpha$ and $\beta$ as the input. In Fig.~{\ref{fig:IFA}} we have set $K = 10$, $L = 1$, and $\sigma_{dB} = 5$. From Fig.~{\ref{fig:IFA}}, we can see that the comparison between simulation and analysis shows excellent agreement, which verifies the analysis we have provided under the FFA. {As we can see from Fig.~\ref{fig:IFA}, relatively high false positive rates are found at low thresholds. This is a consequence of the LVS operating at a point far from the optimal threshold. However, we see that at the information-theoretic optimal threshold, which maximizes the NMI, the false positive rate is approximately 4\%.  This strong dependence on the threshold (also seen in all our other results),  re-emphasizes the critical importance of always operating the LVS at the optimal threshold.} We have investigated a range of other values of $K$, $L$ and $\sigma_{dB}$. Some of these results are shown in Fig.~\ref{fig:IFA_sigma}, where the maximum NMI is shown as a function of $\sigma_{dB}$ for different values of $K$. From Fig.~\ref{fig:IFA_sigma}, we see again the simulations agree with the analytical results.

\subsubsection{Uniformly Distributed Approximation (UDA)}\label{circle}

In this subsection we propose  the Uniformly Distributed Approximation (UDA), where the malicious users are assumed to be uniformly distributed on a circle. Again, although never achievable in practice this simplification will allow us additional insights. More specifically, the {malicious} user's true location is uniformly distributed on a circle, whose radius and center are $R$ and $\bm{\theta}^c$, respectively, where $R > r$, and $r = \max (d^c_i)$.

The main purpose of this model is to commence our probe of how reliable the use of the FFA will be when its assumptions are violated. To this end, we note that if the maximum difference between any two measurements collected from a malicious user, $M_i$ and $M_j$ $(i \neq j)$, is no larger than $\sigma_{dB}$, the scale $R$ at which this occurs provides a \emph{natural} distance at which we could anticipate the FFA and the UDA to be approximately equivalent. To quantify this let us introduce $ \rho \triangleq R/r$. Under the UDA, the difference between $M_i$ and $M_j$ can be written as
\begin{equation}
\begin{split}
\left| M_i \!-\! M_j \right| &= \left|10 \gamma \log_{10} \left(\frac{d_i^t}{d_0}\right) \!-\! 10 \gamma \log_{10} \left(\frac{d_j^t}{d_0}\right)\right|.
\end{split}
\label{differenceM}
\end{equation}
Given that for a malicious user, we have  $\max(|d_i^t - d_j^t|) = 2 r$, and $\min (d_i^t) = R - r$, we can write (\ref{differenceM}) as
\begin{align*}
\left| M_i \!-\! M_j \right| \leq \frac{10 \gamma}{\ln 10} \ln \left(\frac{d_i^t \!+\! 2 r}{d_i^t}\right) \leq
\frac{10 \gamma}{\ln 10} \ln \left(1 \!+\! \frac{2 r}{R \!-\! r}\right),
\end{align*}
where without loss of generality  we have assumed $d_j^t > d_i^t$.
In order to guarantee the required constraint $\max|M_i - M_j| \leq \sigma_{dB}$, we should have
\begin{align*}
\frac{10 \gamma}{\ln 10} \ln \left(1 + \frac{2 r}{R - r}\right) \leq \sigma_{dB},
\end{align*}
which results in
\begin{align}\label{Ratio Bound}
\rho \geq \rho^* \triangleq \frac{2}{\exp\left\{\frac{\sigma_{dB}\ln 10}{10 \gamma}\right\} -1} + 1,
\end{align}
where $\rho^*$ is a reference value that will be utilized when comparison under the FFA is made. Such a comparison is achieved by using the FFA decision rule in (\ref{IFA_decision_rule}) but under the UDA. In such a set up we would anticipate that the optimal thresholds of under the FFA and UDA would be very similar at $\rho^*$.

To proceed with a comparison under the FFA and UDA we conduct Monte Carlo simulations. In these simulations, note that although $p(\bm{m}|\Hnull)$ as given by (\ref{likelihood0_one}) is used, the likelihood $p(\bm{m}|\Halt)$  given
\begin{equation}\label{numerical_Circle}
\begin{split}
p(\bm{m}|\Halt) &= \int p(\bm{m}|\bm{\theta}, \Halt)p(\bm{\theta}|\Halt) d\bm{\theta}.
\end{split}
\end{equation}
 must be determined numerically. The other simulation settings are the same as under the FFA, except the malicious users are uniformly distributed on a circle, whose radius and center are $R$ and $\bm{\theta}^c$, respectively. The measurements collected from the legitimate and malicious users are generated according to (\ref{h0_RSS}) and (\ref{h1_RSS_general}), respectively. To obtain the true numerical NMI under the UDA, we use (\ref{likelihood0_one}) and (\ref{numerical_Circle}) to calculate $p(\bm{m}|\Hnull)$ and $p(\bm{m}|\Halt)$, respectively, and utilize (\ref{Neyman}) as the decision rule.  To simulate the NMI obtained from the use of the FFA decision rule (but under the  UDA) we use (\ref{likelihood0_one}) and (\ref{likelihood1_one}) in order to implement the decision rule in (\ref{IFA_decision_rule}).

 From our results, shown in Fig.~\ref{fig:Circle}, we can see that at values of $\rho<< \rho^*$ the optimal thresholds (the values of $T_{\Lambda}$ which maximize the NMI) for the two cases are very different. However, as $\rho \rightarrow \rho^*$ the optimal threshold obtained under blindly adopting the FFA decision rule (even though the malicious user is not at infinity) is effectively the optimal value.
 Note also, the maximum values of the two NMIs and the corresponding $T_{\Lambda}^*$ are coincident when $\rho = \rho^*$, which verifies that the reference value of $\rho$ in (\ref{Ratio Bound}) is reasonable.

 \begin{figure}[t]
    \centering
        \epsfysize=7cm
        \epsffile{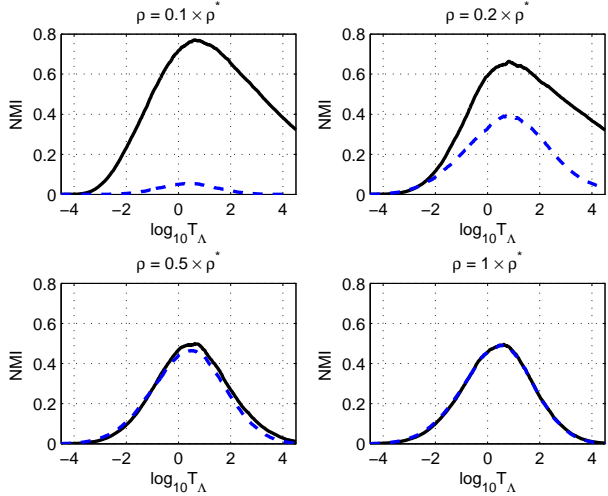}
        \caption{Normalized Mutual Information (NMI) with different values of $\rho$, where $K = 10, L = 1$, $\sigma_{dB} = 5$, and $\rho^* = 5.275$. The solid curves represent the NMI achieved under the correct decision rule (\ref{Neyman}). The dashed curves represent the NMI achieved under the FFA decision rule (\ref{IFA_decision_rule}).}
    \label{fig:Circle}
\end{figure}

\subsubsection{The DM Threat Model} \label{More General Scenario}

In this subsection we implement the optimal location verification algorithm under our adopted threat model - the DM threat model. In this model $p(\bm{\theta}|\Halt)$ is assumed to be  uniform over the annulus formed by two concentric circles, whose finite radii are $R_1$ and $R_2$ $(R_1 < R_2)$, respectively, and whose mutual center is $\bm{\theta}^c$. The use of an annulus setting allows us to cover more general settings (beyond just single highways/freeways) such as freeway intersection regions where the the freeways can have multiple directions. In any scenario (single  or intersecting roads) it will be assumed that the malicious user will not enter into any region (we assume the malicious users knows the locations of all BSs) where he is less than some distance $R_1$ from any of the VANET's infrastructure BSs  (see footnote 2).

This implies $p(\bm{\theta}|\Halt) = 1/\pi (R_2^2 - R_1^2)$, where $\bm{\theta} \in \{\bm{\theta}|R_1^2 \leq (u - u^c)^2 + (v - v^c)^2 \leq R_2^2\}$. Under this model, $p(\bm{m}|\Hnull)$ is also the same as in (\ref{likelihood0_one}), and $p(\bm{m}|\Halt)$ is as given in (\ref{numerical_Circle}) but with the modified prior distribution. Again,  no closed form solution is available for (\ref{numerical_Circle}).


 We present new Monte Carlo simulations where the settings are again the same as under the FFA, except that now  the malicious users are uniformly distributed in the annulus. Again, the measurements collected from the legitimate and malicious users are generated according to (\ref{h0_RSS}) and (\ref{h1_RSS_general}), respectively. To obtain the true numerical NMI under the DM threat model, we use (\ref{likelihood0_one}) and (\ref{numerical_Circle}) to calculate $p(\bm{m}|\Hnull)$ and $p(\bm{m}|\Halt)$, respectively, and utilize (\ref{Neyman}) as the decision rule.  To simulate the NMI obtained from the use of the FFA decision rule (but under the DM threat model) we use (\ref{likelihood0_one}) and (\ref{likelihood1_one}) in order to implement the decision rule in (\ref{IFA_decision_rule}). The results of our simulations are shown in Fig.~\ref{fig:AnnulusR2}, where $\rho = 0.2 \rho^*$, and $\rho$ is redefined as $\rho = R_1/r$.

\begin{figure}[t]
    \centering
        \epsfysize=7cm
        \epsffile{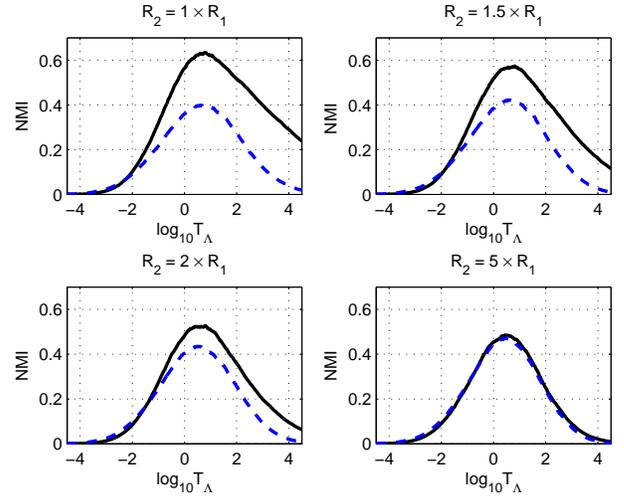}
        \caption{Numerical and IFA approximated Normalized Mutual Information (NMI) with different values of $R_2$, $\rho^* = 5.275, \rho = 0.2 \times \rho^*$, and $\sigma_{dB} = 5$. The solid curves represent the NMI achieved under the correct decision rule (\ref{Neyman}). The dashed curves represent the NMI achieved under the FFA decision rule (\ref{IFA_decision_rule}).}
    \label{fig:AnnulusR2}
\end{figure}

In the top left plot of  Fig.~\ref{fig:AnnulusR2}, we have set $R_2 = R_1$, so in this specific plot the DM threat model is equivalent to that under the UDA (the result is the same as that shown in the top right plot of Fig.~\ref{fig:Circle}). However, again we see that as $R_2$ increases the optimal threshold obtained under blindly adopting the FFA decision rule (even though the malicious users are constrained within an annulus) is effectively the optimal value. Note also, that in the DM  threat model, as $R_2$ increases this results holds for  cases when $\rho<< \rho^*$ (which was not the case under the UDA).


As a final point, we note that instead of numerically solving (\ref{numerical_Circle}), it may be useful to find an approximate closed-form solution to (\ref{numerical_Circle}) (\emph{e.g.} this would allow for an approximate closed-form for the false positive and detection rates under this threat model). We can approximate $p(\bm{m}|\Halt)$ via an application of the Laplace approximation, which can approximate integrals through a series expansion by using local information about the integrand around its maximum \cite{kass1995bayes} \cite{nevat2011cooperative}. The details are as follows. First, let us define a quantity as:
\begin{align*}
h\left(\bm{\theta}|\Halt \right) \triangleq \ln \left( p\left(\bm{m} |\bm{\theta}, \Halt \right) p\left(\bm{\theta}|\Halt \right)\right).
\end{align*}
  $h\left(\bm{\theta}|\Halt \right)$ can be expanded using a Taylor series around its maximum \textit{a-posteriori} (MAP) estimate, denoted by $\widehat{\bm{\Theta}} = \arg\underset{\bm{\theta}}{\max}\, {\{p(\bm{m} |\bm{\theta},\Halt) p(\bm{\theta}|\Halt)\}}$. This is the point where the posterior density is maximized, $i.e.$, the mode of the posterior distribution. Hence, we obtain to second order,
\begin{equation}
\label{Taylor_expansion}
\begin{split}
h\left(\bm{\theta}|\Halt \right) =~&
h\left(\widehat{\bm{\Theta}}|\Halt \right) +
\left(\bm{\theta}  - \widehat{\bm{\Theta}} \right)^T \underbrace{\frac{\partial h\left(\widehat{\bm{\Theta}} |\Halt\right)}{\partial \bm{\theta} }}_{
\left(=0\right) \text{ at MAP location}}\\
&+ \frac{1}{2} \left(\bm{\theta}  - \widehat{\bm{\Theta}} \right)^T \frac{\partial^2 h\left(\widehat{\bm{\Theta}} |\Halt\right)}{\partial^2 \bm{\theta} }
\left(\bm{\theta}  - \widehat{\bm{\Theta}} \right).
\end{split}
\end{equation}
The second term in (\ref{Taylor_expansion}) is $0$, because the first derivative is zero at the maximum of $h\left(\bm{\theta} |\Halt \right)$. Replacing $h\left(\bm{\theta}|\Halt \right)$  by the truncated second-order Taylor series yields:
\begin{align*}
h\left(\bm{\theta}|\Halt \right) \approx
h\left(\widehat{\bm{\Theta}}|\Halt \right) +
\frac{1}{2} \left(\bm{\theta}  - \widehat{\bm{\Theta}} \right)^H \mathbb{\H} \left(\bm{\theta}  - \widehat{\bm{\Theta}} \right),
\end{align*}
where $\mathbb{\H}$ is the Hessian of the $\ln$ posterior, evaluated at $\widehat{\bm{\Theta}}$:
\begin{align*}
\mathbb{\H}\triangleq  \left.\frac{\partial^2 h\left(\widehat{\bm{\Theta}}|\Halt \right)}{\partial^2 \bm{\theta} }\right|_{\bm{\theta}=\widehat{\bm{\Theta}} }=
\left. \frac{\partial^2 h\left(\bm{\theta}|\Halt \right)}{ \partial \bm{\theta}   \partial \bm{\theta}^H} \right|_{\bm{\theta}=\widehat{\bm{\Theta}} }.
\end{align*}
Using the above approximation, we have the following
\begin{align*}
\begin{split}
\ln p\left(\bm{m}|\Halt \right)
=& \ln \int p\left(\bm{m} |\bm{\theta}, \Halt\right)
						 p\left(\bm{\theta}|\Halt\right) d \bm{\theta}\\
=& \ln \int 						
						 \exp^{h\left(\bm{\theta}|\Halt \right)} d \bm{\theta}\\
\approx& \ln \int 						
						 \exp^{h\left(\widehat{\bm{\Theta}} |\Halt\right) +
\frac{1}{2} \left(\bm{\theta}  - \widehat{\bm{\Theta}} \right)^T \mathbb{\H} \left(\bm{\theta}  - \widehat{\bm{\Theta}} \right)} d \bm{\theta}\\
=& h\left(\widehat{\bm{\Theta}}|\Halt \right) +\ln \int 						
						 {\exp^{\frac{1}{2} \left(\bm{\theta}  - \widehat{\bm{\Theta}} \right)^T \mathbb{\H} \left(\bm{\theta}  - \widehat{\bm{\Theta}} \right)}}d \bm{\theta}\\
=&  h\left(\widehat{\bm{\Theta}}|\Halt \right)+\frac{1}{2} \ln \left|2 \pi \mathbb{\H}\right|\\
=&  \ln p\left(\widehat{\bm{\Theta}}|\Halt \right)
+\ln p\left(\bm{m}|\widehat{\bm{\Theta}},\Halt \right)\\
&+ \ln \left|2 \pi \mathbb{\H}^{-1}\right|^{1/2}.
\end{split}						
\end{align*}
Finally, the marginal likelihood estimate can be written as
\begin{align}
\label{laplace_est}
\widehat{p}\left(\bm{m}|\Halt\right)=
p\left(\widehat{\bm{\Theta}}|\Halt \right)
p\left(\bm{m} |\widehat{\bm{\Theta}}, \Halt \right)
\left|2 \pi \mathbb{\H}^{-1}\right|^{1/2}.
\end{align}
In (\ref{laplace_est}), $p\left(\widehat{\bm{\Theta}}|\Halt \right)$ and $\left|2 \pi \mathbb{\H}^{-1}\right|^{1/2}$ are both constant for a specific $\widehat{\bm{\Theta}}$; thus the Laplace approximated likelihood function, $\widehat{p}\left(\bm{m}|\Halt\right)$, is a $K$ dimensional normal distribution with the same variance as $p(\bm{m}|\Hnull)$ (because the variances of $p(\bm{m}|\Hnull)$ and $p(\bm{m}|\bm{\theta}, \Halt)$ are the same). Under the Laplace approximation the decision rule in (\ref{Neyman}) is approximated by
\begin{equation}\label{Laplace_decision_rule}
\widehat{\Lambda}\left(\bm{m}\right) = \frac{\widehat{p}\left(\bm{m}|\Halt\right)}{p\left(\bm{m}|\Hnull \right)}
\begin{array}{c}
\overset{\Hoalt}{\geq} \\
\underset{\Honull}{<}
\end{array}%
T_{\Lambda}.
\end{equation}

\begin{figure}[t]
    \centering
        \epsfysize=7cm
        \epsffile{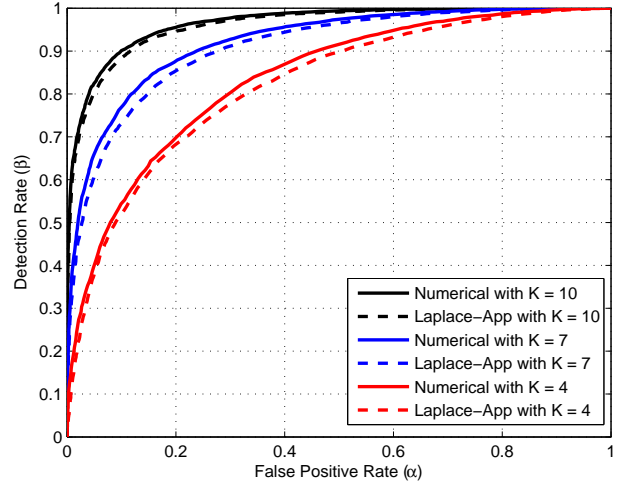}
        \caption{Numerical and Laplace approximated ROC curves with different values of $K$, where $\sigma_{dB} = 5, L = 1, R_1 = 100 m$, and $R_2 = 500 m$.}
    \label{fig:Laplace_ROC}
\end{figure}

To study the performance our Laplace approximation we calculate ROC curves for both the numerical Monte Carlo calculation of $p(\bm{m}|\Halt)$ and the Laplace approximation of $p(\bm{m}|\Halt)$. These different forms are then used in the same decision rule (\ref{Neyman}) in order to form the ROC curves. The results of these simulations are shown in Fig.~\ref{fig:Laplace_ROC}, and as we can see the approximation is a good one for the parameters used. We have further investigated the accuracy of the approximation over a range of other parameters, finding  similar results to those shown in Fig.~\ref{fig:Laplace_ROC}.

\subsection{Discussion}

In the preceding sections we have looked at a general attack scenario under specific threat models. The attack scenario we have focussed on is that of a non-colluding adversary who is attempting to spoof he is within the perimeter of some wireless network, when in reality he is some distance beyond the network boundary. A non-colluding adversary who is within the network region will in general be easily identified due to his inability to set different received signal strengths  at different BSs. An attack from outside the network region is the perhaps most realistic and likely scenario one can imagine for the emerging  ITS scenario. For example, a single adversary who is some distance from the highway (so as not to be easily  identified or caught) is attempting to disrupt proper functioning of the ITS on the highway.

What we have shown through our investigations of specific threat models under our general attack scenario, is that an optimal LVS can be developed for each threat model, but in a non-straightforward manner in most cases - \emph{i.e.} no closed-form solutions for the detection and false positive rates are available. Without such closed-form solutions for these rates, one must resort to complex and time consuming Monte Carlo simulations in order to determine the optimal threshold.  However, from considerations of the  FFA, and how other threat models can be approximated under the FFA, we have shown how a straightforward LVS algorithm can be deployed which is effectively optimal for most circumstances. More specifically, using analytical solutions for the detection and false positive rates in the FFA setting, which are then used in easily determining the optimal threshold value, a straightforward LVS is developed whose performance is near-optimal when the adversary is close to the network, and optimal as the adversary  moves to a large distance from network.

However, of course more sophisticated attacks than those highlighted above are possible. The most obvious of these is that of \emph{colluding} adversaries who can communicate and cooperate with each other so as to form collective attacks on the LVS. An example of such an attack would be colluding adversaries who set different received signal strengths at different BSs. On the defensive side, the network could also deploy beam-forming techniques to help the LVS thwart these types of attacks. The LVS could also deploy tracking algorithms and physical layer security techniques to assist in its defense. These more sophisticated forms of attack and their corresponding defensive strategies are out of scope of the current work, but do form part of our ongoing research efforts in this area. However, we should be clear that ultimately \emph{any} defensive strategy for an LVS is ultimately doomed if the colluding adversary is afforded unlimited resources and the communications network is purely classical in nature.\footnote {Note that location verification in the context of quantum communications
systems has previously been considered e.g. \cite{kent2006tagging}, \cite{malaney2010location}, \cite{malaney2010quantum}, and it has been
argued that such systems are able to securely verify a location under all
known threat models \cite{malaney2011location} - although see \cite{buhrman2011position} who argue otherwise. It is
undisputed that classical communications alone cannot achieve secure location
verification under all known threat models.}

In this work, we assume error free location estimation for the legitimate users. In fact, if the localization error is small relative to the scale of the network boundary, the effects of this error can be ignored. To verify this, we have carried out additional Monte Carlo simulations identical to those producing Fig.~5 except the localization error for  legitimate users is assumed to follow a bivariate normal distribution. We find the results are negligibly different from those shown in Fig.~5 even if the variance of the localization error in each coordinate is 100 $m^2$. It is perhaps worth noting that the prior distributions of the localization error and a malicious user's location are different, and we could not distinguish between them if there is an overlap between the two distributions. Inaccurate knowledge of the system and channel parameters will reduce the  LVS performance. We have quantified this for our specific LVS by carrying out additional simulations in which knowledge of the input LVS parameters are modified by up to $50\%$ from the true underlying parameters. We find that such errors induce an 40$\%$ error in the value of the optimal threshold. However, we do point out that this is a worse case scenario as we have assumed in our simulations that the attacker retains perfect knowledge of the parameters (which in reality will be untrue) so as to perform the optimal attack (optimal power boost). We should also note that any spatial correlation of shadowing beyond that accounted for by the shadowing standard deviation has not been included in our simulations. Any such correlation would add additional uncertainty into any LVS system unless it had been pre-measured and included as part of the channel model.

\subsection{Results in Relation to Other Works}

Location verification has been an active research area, and many verification  algorithms have been proposed for  VANETs \emph{e.g.} \cite{sastry2003secure,leinmuller2006position, xiao2006detection, song2008secure, yan2009providing, ren2009location, yan2010cross, abumansoor2012secure}, wireless sensor networks \emph{e.g.}  \cite{vora2006secure, wei2012lightweight}, and  generic wireless networks  \emph{e.g.} \cite{malaney2006secure, malaney2007securing, malaney2007wireless, capkun2008secure, zhang2008evaluation, chen2010detecting, liu2010node, chiang2012secure, yang2012securing}.

Perhaps the most closely related works to ours are those which propose optimizing the system's threshold by minimizing  the probability of misclassification (\emph{e.g.} \cite{chen2010detecting}), which is defined as $P_e = P_0 \alpha + (1-P_0)(1- \beta)$. Of course, a direct comparison between such systems and ours is not entirely meaningful, due to the different optimization metrics being used. Further to this, it is important to note the complex interplay between the entropy of a random variable, $H(X)$, and the probability of misclassification.  Although it may seem at first counter-intuitive, the fact is that there is not a one-to-one relationship between $H(X)$ and $P_e$. That is, two random variables with the same entropy can have different $P_e$ \cite{feder1992universal}. This same issue extends to NMI and $P_e$, and in the context of our LVS, it is important to recognize this fact. As such, if optimization of $P_e$ is the system  objective, then use of a Bayesian hypothesis,  where the costs of all types of classifications are equal, will suffice.\footnote{In a more general Bayesian framework, the average Bayesian cost is defined as $C = P_0 \alpha C_0 + (1 - P_0) (1- \beta) C_1$, where $C_0$ is the pre-assigned cost of rejecting a legitimate user, and $C_1$ is the pre-assigned cost of accepting a malicious user \cite{barkat2005signal}. If $C_0$ and $C_1$ are known or can be set, the Bayesian framework is optimal for an LVS in the sense that it minimizes the average Bayesian cost ($P_e$ is the special case of $C$ with $C_0=C_1=1$). } But again we must stress that in the context of real-world LVS deployments this represents a strong \emph{subjective} decision on the cost of misclassifications. Given the complexity, and the many different roles of location information within the ITS scenario (crash avoidance, vehicle-congestion avoidance, vehicle-to-vehicle communication protocols \emph{etc}.), proper determination of misclassification costs will be, at best, extremely complex in nature. It is for this reason we have approached optimization of our LVS from an \emph{objective} information-theoretic viewpoint. Our guiding light has been the well-known Infomax principle \cite{linsker1988self}, which states an optimal system must transfer as much information as possible from its input to its output - \emph{i.e.} maximizes the mutual information between its  inputs and outputs.

\begin{figure}[t]
    \centering
        \epsfysize=7cm
        \epsffile{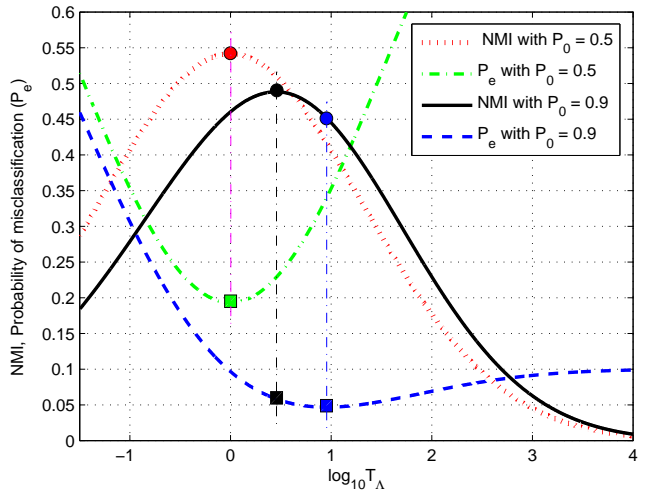}
        \caption{Normalized Mutual Information (NMI) and probability of misclassification ($P_e$) with different values of $T_{\Lambda}$. (Note here the system and network parameters are same as those utilized in Fig.~\ref{fig:IFA}.)}
    \label{fig:comparison}
\end{figure}

 Notwithstanding the above discussion, we compare the optimal thresholds for NMI and $P_e$, as shown in Fig.~\ref{fig:comparison}. From this figure, we can see that for $P_0=0.5$ the optimal threshold ($T_{\Lambda}=1$) is the same for both algorithms. However, the optimal thresholds for the two algorithms are different when  $P_0 = 0.9$. Further, we see that in the $P_0 = 0.9$ case, the change in $P_e$, if the optimal NMI threshold is used instead of the optimal $P_e$ threshold, is significantly less than the change in NMI if the optimal $P_e$ threshold is used instead of the optimal NMI threshold.

 We expand upon this last point in Fig.~\ref{fig:P0} where the optimal thresholds for each system are plotted as  functions of $P_1=1-P_0$.
 This figure  outlines another pragmatic advantage of the NMI  approach.
  In  reality, the base rate of intrusions ($P_1=1-P_0$)  is an unknown parameter \emph{for all} LVS systems. As such, we see from Fig.~\ref{fig:P0} that the use of NMI results in a  more robust system. As the true value of $P_1$ approaches small values (in any real situation it will be small)  the NMI threshold is insensitive to the assumed $P_1$. This means that when using NMI as the optimization metric, any  mismatch between true $P_1$ and assumed $P_1$ has little impact on system performance. In the Bayesian framework, however, the optimal threshold  for minimizing $P_e$ remains very sensitive (linear) to the assumed value of $P_1$. In this latter case, a mismatch between true $P_1$ and assumed $P_1$ results in very poor system performance.

\begin{figure}[t]
    \centering
        \epsfysize=7cm
        \epsffile{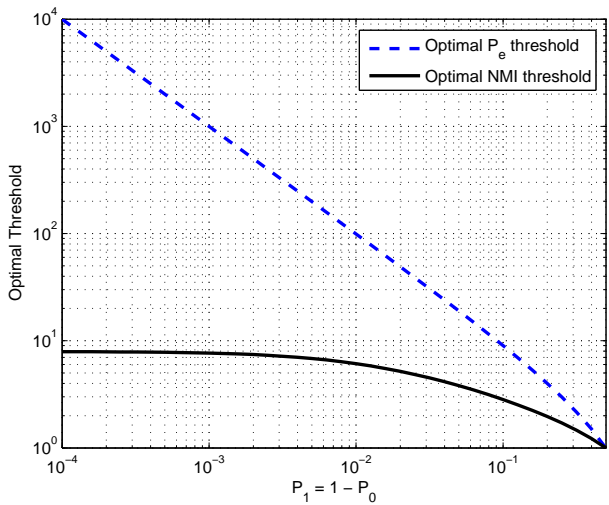}
        \caption{Optimal NMI threshold and optimal $P_e$ threshold as functions of $P_1 = 1 - P_0$. (Note here the system and network parameters are same as those utilized in Fig.~\ref{fig:IFA}.)}
    \label{fig:P0}
\end{figure}

\section{Conclusions} \label{Conclusions}
In this paper, we developed  an information-theoretic framework for an LVS, utilizing as the objective optimization criterion  the mutual information between input and output data of the LVS. We investigated our new  optimal LVS under a  realistic threat model, showing how in a  straightforward implementation of an LVS, information-theoretic optimality is approached as the non-colluding adversary moves further from the network region it is claiming to be within. This straightforward implementation makes our new algorithm an ideal candidate for the LVS that will be needed in emerging network-based and safety-enhanced transportation systems, such as ITS.

\section{Acknowledgment}

This work has been supported by the University of New
South Wales, and the Australian Research Council, grant DP120102607. Gareth W. Peters is supported in this research by a Royal Society International Exchanges Scheme  IE121426.

\end{document}